\newtheorem{proposition}{Proposition}
\renewcommand{\b}[1]{\mathbf{#1}}
\newcommand{\bi}[1]{\boldsymbol{#1}}
\renewcommand{\c}[1]{\mathcal{#1}}
\renewcommand{\r}[1]{\mathrm{#1}}
\newcommand{\s}[1]{\mathsf{#1}}
\newcommand{\Cx}{\mathbbm{C}}
\newcommand{\idty}{\mathbbm{1}}
\DeclareMathOperator{\id}{id}
\DeclareMathOperator*{\tr}{Tr}
\DeclareMathOperator*{\map}{map}
\newcommand{\<}{\langle}
\renewcommand{\>}{\rangle}
\providecommand{\abs}[1]{|#1|}
\providecommand{\norm}[1]{\Vert #1 \Vert}
\begin{document}

\begin{center}
{\LARGE  Entropic characterization of quantum operations} \\[12pt]
W.~Roga$^1$, M.~Fannes$^2$ and K.~{\.Z}yczkowski$^{1,3}$
\end{center}

\medskip
$^1$ Instytut Fizyki im.~Smoluchowskiego,
Uniwersytet Jagiello{\'n}ski,
PL-30-059 Krak{\'o}w, Poland \\
$^2$ Instituut voor Theoretische Fysica,
Universiteit Leuven, B-3001 Leuven, Belgium \\
$^3$Centrum Fizyki Teoretycznej, Polska Akademia Nauk,
PL-02-668 Warszawa, Poland

\bigskip
\textbf{Abstract:}

We investigate decoherence induced by a quantum channel in terms of minimal output entropy and of map entropy. The latter is the von~Neumann entropy of the Jamio{\l}kowski state of the channel. Both quantities admit $q$-Renyi versions. We prove additivity of the map entropy for all $q$. For the case $q=2$, we show that the depolarizing channel has the smallest map entropy among all channels with a given minimal output Renyi entropy of order two. This allows us to characterize pairs of channels such that the output entropy of their tensor product acting on a maximally entangled input state is larger than the sum of the minimal output entropies of the individual channels. We conjecture that for any channel $\Phi_1$ acting on a finite dimensional system there exists a class of channels $\Phi_2$ sufficiently close to a unitary map such that additivity of minimal output entropy for $\Psi_1 \otimes \Psi_2$ holds.

\medskip
PACS: 02.10.Ud (Mathematical methods in physics; Linear algebra),
03.67.-a (Quantum mechanics, field theories, special relativity; Quantum information),
03.65.Yz (Decoherence; open systems; quantum statistical methods)

\section{Introduction}

In quantum information any experimentally realizable set-up that processes states of an $n$-level system is modelled by a quantum operation, also called quantum channel. This is a completely positive affine transformation of the state space. The set of quantum operations has a real dimension $n^2(n^2-1)$ and its structure is far from trivial. Even for the simplest case $n=2$ the structure of the $12$ dimensional convex set of qubit operations is only partially understood~\cite{RSW02}.

The information encoded in a given quantum state is quantified by its von~Neumann entropy or by some similar quantity such as its Renyi entropy of order $q$. The randomizing action of a given quantum channel $\Phi$ can then be characterized by the minimal output entropy $\s S^{\r{min}}_q(\Phi)$: this is the minimal Renyi entropy of order $q$ of an output state of the channel where the minimization is over the entire set of quantum input states. Finding out whether the minimal output entropy is additive with respect to the tensor product of channels was considered to be one of the key questions of quantum information theory. Although Hastings~\cite{hastings} recently showed that in general additivity does not hold, finding explicit counterexamples in low dimensions is still an open problem. An even more relevant question is to specify classes of maps for which additivity holds~\cite{king}.

The decoherence induced in an $n$-level system by a channel may alternatively be characterized by the map entropy $\s S^{\r{map}}(\Phi)$. This quantity, defined as the entropy of the corresponding Jamio{\l}kowski state~\cite{zyczkowski}, varies from zero for a unitary channel to $2\log n$ for the completely depolarizing channel. The entropy of a coarse graining channel with respect to a given basis, $\Phi_{CG}(\rho)={\rm diag}\,(\rho)$, is equal to $\log n$. 
If two quantum maps are close in the sense that the trace distance between the corresponding states is small, then they have similar map entropies \cite{F73}.
The map entropy is easier to determine than the minimal output entropy as there is no minimization to be performed.

The aim of this work is to investigate links between both entropic characterizations of quantum maps. We prove additivity of the map entropy with respect to the tensor product and generalize this result to arbitrary Renyi entropies. To establish relations between the minimal output entropy and the map entropy we investigate the structure of the set of all quantum operations projected onto the plane $\bigr(\s S^{\r{map}}(\Phi), \s S^{\r{min}}(\Phi)\bigl)$. For qubit channels we find the boundaries of this projection and obtain in this way bounds between both quantities.

For the Renyi entropy of order two we show that, for any dimension $n$, the upper boundary of this projection corresponds to the family of depolarizing channels. As for these channels both entropies are explicitly known we obtain inequalities between $\s S^{\r{map}}_2(\Phi)$ and $\s S^{\r{min}}_2(\Phi)$. Applying these results to composite channels and using the additivity of the map entropy we prove a bound for the output entropy of a composite channel minimized over the set of maximally entangled states. This allows us to conjecture that for any two quantum channels of sufficiently different degree of decoherence, e.g.\ $\s S^{\r{map}}(\Phi_1) \gg \s S^{\r{map}}(\Phi_2)$, the minimal output entropy of their product is additive, $\s S^{\r{min}}(\Phi_1 \otimes \Phi_2) = \s S^{\r{min}}(\Phi_1) + \s S^{\r{min}}(\Phi_2)$.

This paper is organized as follows. In Section~2 we introduce some notation and necessary concepts. Some properties of the map entropy, including its additivity with respect to the tensor product, are discussed in Section~3. In Section~4 we derive bounds between the minimal output entropy and the map entropy and we characterize sets of maps for which the additivity of the minimal output entropy can be conjectured. The case of qubit maps is treated in Section~5 where the projection of the entire set of bistochastic quantum operations onto the plane spanned by both entropies is worked out. Some auxiliary material concerning properties of qubit maps is presented in the Appendix.

\section{Quantum channels and their entropies}

A quantum state of an $n$-level system can be identified with a \emph{density matrix} $\rho$ of dimension $n$, i.e.\ a positive definite and normalized matrix:
\begin{equation}
\c D_{n} = \bigl\{ \rho: \Cx^n \to \Cx^n \mid \rho\ge 0,\ \tr\rho = 1 \bigr\}.
\end{equation}

A \emph{quantum operation} or \emph{quantum channel} describes a discrete evolution of the quantum states, it is a linear map $\Phi: \c D_n \to \c D_n$ that is \emph{trace preserving} $(\tr \Phi(\rho) = \tr \rho)$ and \emph{completely positive}. Complete positivity means that the map $\Phi \otimes \id_m$ transforms a positive operator into a positive operator for every dimension $m$ of the extended space. Kraus's theorem~\cite{kraus} says that a map is completely positive if and only if it is of the form $\Phi(\rho) = \sum_{\alpha=1}^r K_\alpha \rho K_\alpha^\dagger$. The trace preserving property is equivalent with $\sum_{\alpha=1}^r K_\alpha^\dagger K_\alpha = \idty$.

The \emph{Jamio{\l}kowski isomorphism}~\cite{jam} represents a quantum map on $\c D_n$ by a state in $\c D_{n^2}$:
\begin{equation}
\sigma^\Phi := \bigl( \Phi\otimes\id \bigr) (|\phi_+\> \<\phi_+|) \enskip\text{with}\enskip |\phi_+\> = \frac{1}{\sqrt n}\, \sum_{i=1}^n |i\> \otimes |i\> = \frac{1}{\sqrt n}\, \sum_{i=1}^n |ii\>.
\label{dyn}
\end{equation}
The matrix $D^\Phi := n\sigma^\Phi$, acting on the doubled space $\c H_A \otimes \c H_B$, is called \emph{dynamical matrix} or \emph{Choi matrix}~\cite{choi}. Positivity of the Choi matrix is equivalent with complete positivity of the corresponding channel $\Phi$, while the partial trace condition $\tr_B D^\Phi = \idty$ is equivalent with $\Phi$ preserving the trace. The rank of the Choi matrix $D^{\Phi}$ is equal to the minimal number of terms needed in a Kraus decomposition. This number is also called the Kraus rank of $\Phi$.

The \emph{Renyi entropy} of order $q$ of a state $\rho$ is defined by
\begin{equation}
\s S_q(\rho) := \frac{1}{1-q}\, \log \tr \rho^q.
\label{ren}
\end{equation}
In the limit $q \to 1$, the Renyi entropy tends to the von~Neumann entropy
\begin{equation}
\s S(\rho) = \lim_{q \to 1} \s S_q(\rho) = -\tr \rho \log \rho.
\end{equation}
For any map $\Phi$ acting on the set $\c D_n$ of quantum states one introduces the \emph{minimum output entropy},
\begin{equation}
\s S^{\r{min}}_q(\Phi) := \min_\rho\ \s S_q\bigl(\Phi(\rho)\bigr),
\end{equation}
where the minimum is taken over all states in $\c D_n$. The interesting question then arises whether the minimal output entropy of the tensor product of two quantum operations is equal to the sum of minimal output entropies of these operations~\cite{king}. The additivity of minimal output entropy is equivalent to the additivity of channel capacity~\cite{shor}. For some special classes of quantum operations additivity of minimal output entropy holds but it is known that it fails in general. The first proof by Hastings~\cite{hastings} was based on random operations acting on high dimensional state spaces and was not constructive. Later some concrete counterexamples to additivity were presented in~\cite{horodecki}.

Another characteristic of the decoherent behaviour of a quantum channel is the \emph{map entropy} of the channel which is the entropy of the rescaled dynamical matrix $\sigma^\Phi$~\cite{zyczkowski}:
\begin{equation}
\s S^{\r{map}}_q(\Phi) := \frac{1}{1-q}\, \log \tr(\sigma^\Phi)^q.
\end{equation}
The map entropy is equal to $0$ if and only if $\Phi$ is a unitary operation. It reaches its maximum, $2 \log n$, at the maximally depolarizing channel $\Phi_*$ which transforms any initial state into the maximally mixed state $\rho_* := \frac{1}{n}\, \idty$. The map entropy was considered earlier in the context of quantum capacity: the quantum capacity of a bistochastic qubit channel of Kraus rank two was shown to be equal to its map entropy~\cite{VV03}. Several properties of this entropy were recently discussed in~\cite{WR, MZ08}. This quantity can be used to bound the Holevo information of output states of a measurement apparatus~\cite{WRPRL} defined by the Kraus operators of a quantum channel. The map entropy is as a special instance of the exchange entropy: it is the entropy of the environment, initially in a pure state, after an action of the quantum operation on the maximally mixed state. For bistochastic channels, i.e.\ channels preserving the maximally mixed state, the map entropy is subadditive with respect to concatenation~\cite{WR}:
\begin{equation}
\s S^{\r{map}}(\Phi_2 \circ \Phi_1)  \le \s S^{\r{map}}(\Phi_1) + \s S^{\r{map}}(\Phi_2).
\end{equation}
A generalization of this relation to general quantum maps was also found. Further properties of the map entropy and its relation to the minimal output entropy are discussed in the subsequent sections.

\section{Properties of the map entropy}

\subsection{Extremal values of entropic characteristics}
\label{sec2}

Further on depolarizing channels play a distinguished role, they form a one parameter family of quantum operations $\Lambda_n$ on $\c D_n$~\cite{kingdepol}:
\begin{equation}
\Lambda_n(\rho) := \lambda \rho + (1-\lambda)\, \tfrac{1}{n}\, \idty \enskip\text{where}\enskip \lambda \in [-\frac{1}{n^2-1}, 1].
\label{depol}
\end{equation}
The constraint on $\lambda$ ensures the complete positivity of $\Lambda_n$. The minimal Renyi output entropy of such a channel can be computed explicitly~\cite{kingdepol} by considering the image of an arbitrary pure state:
\begin{equation}
\s S^{\r{min}}_2(\Lambda_n) = -\log \Bigl( \frac{1+(n-1)\lambda^2}{n}\Bigr).
\label{smin}
\end{equation}
According to (\ref{dyn}) the normalized dynamical matrix of a depolarizing channel reads
\begin{equation}
\sigma^{\Lambda_n} = \frac{1}{n}\Bigl(\, \sum_{ij} \lambda|i\> \<j| \otimes |i\> \<j| + \frac{1-\lambda}{n}\, \delta_{ij}\, \idty \otimes |i\> \<j| \Bigr),
\end{equation}
where $\delta_{ij}$ denotes the Kronecker delta. Therefore the map Renyi entropy of order two is given by
\begin{equation}
\s S^{\r{map}}_2(\Lambda_n) = -\log \Bigl( \frac{1+(n^2-1)\lambda^2}{n^2}\Bigr).
\label{smap}
\end{equation}
Note that both the ranges of values of the minimal Renyi output entropy and of the map Renyi output entropy coincide with the full ranges that such entropies can attain.

\begin{proposition}
\label{prop:depol}
Among all channels with a given minimal Renyi output entropy of order two the depolarizing channel has the smallest map Renyi entropy.
\end{proposition}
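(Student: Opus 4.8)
The plan is to show that, for fixed $\s S^{\r{min}}_2$, the depolarizing channel minimizes $\s S^{\r{map}}_2$, equivalently that it \emph{maximizes} $\tr(\sigma^\Phi)^2$ at fixed $\tr(\Phi(\rho))^2$ bounds. Since both entropies of order two are monotone functions of purities, I would translate the statement into a purity inequality: among all channels whose worst-case (maximal) output purity equals a prescribed value $\mu := \max_\rho \tr\Phi(\rho)^2$, the depolarizing channel has the \emph{largest} value of $\tr(\sigma^\Phi)^2 = \tfrac1{n^2}\tr(D^\Phi)^2$. Because $\s S^{\r{min}}_2(\Phi)=-\log\mu$ and $\s S^{\r{map}}_2$ is decreasing in $\tr(\sigma^\Phi)^2$, maximizing the map purity at fixed $\mu$ is the same as minimizing the map entropy at fixed minimal output entropy.

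**First** I would fix the purity $P_\Phi(\rho) := \tr\Phi(\rho)^2$ and look for a clean expression linking the map purity to the output purities. The key computational identity I expect to need is a representation of $\tr(\sigma^\Phi)^2$ as an average, or an integral over input states, of the output purities $\tr\Phi(\rho)^2$. Concretely, one writes $\tr(\sigma^\Phi)^2$ using the swap operator $\s V$ on $\c H_A\otimes\c H_A$, via $\tr X^2 = \tr\bigl[(X\otimes X)\,\s V\bigr]$, and then integrates the single-system identity $\tr\Phi(\rho)^2 = \tr\bigl[(\Phi\otimes\Phi)(\rho\otimes\rho)\,\s V\bigr]$ over the Haar measure on pure input states $\rho=|\psi\>\<\psi|$. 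Using the standard second-moment formula $\int |\psi\>\<\psi|^{\otimes 2}\,d\psi \propto \idty + \s V$ on the symmetric subspace, the Haar average of the output purity reduces to a linear combination of $\tr(\sigma^\Phi)^2$ and a term fixed by trace preservation. This yields an exact relation
\begin{equation}
\int \tr\Phi(|\psi\>\<\psi|)^2 \, d\psi = \frac{n\,\tr(\sigma^\Phi)^2 + 1}{n+1},
\end{equation}
so the map purity is, up to affine rescaling, precisely the \emph{Haar average} of the output purity.

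**The decisive step** then follows from comparing an average to a maximum: the average output purity is bounded above by its maximum, $\int \tr\Phi(\rho)^2\,d\psi \le \max_\rho \tr\Phi(\rho)^2 = \mu$. Substituting the identity above gives $\tr(\sigma^\Phi)^2 \le \bigl((n+1)\mu-1\bigr)/n$, a bound depending only on $\mu$, i.e.\ only on $\s S^{\r{min}}_2(\Phi)$. It remains to verify that the depolarizing channel \emph{saturates} this bound: since $\Lambda_n$ sends every pure state to a state of the same purity, its output purity is constant over all input states, so the Haar average equals the maximum exactly. A direct check using the explicit formulas (\ref{smin}) and (\ref{smap}) confirms that $\Lambda_n$ attains equality, so among all channels with the given value of $\mu$ it achieves the largest map purity, hence the smallest map entropy.

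**The main obstacle** I anticipate is establishing the purity-averaging identity cleanly, i.e.\ carrying out the second-moment integration and correctly accounting for the trace-preservation constraint $\tr_B D^\Phi = \idty$ in the constant term; one must be careful that the averaging is over the full Jamio{\l}kowski correspondence and that the affine relation has exactly the right coefficients, because the saturation argument hinges on the average-versus-maximum inequality being tight precisely for channels with constant output purity. A secondary point worth confirming is that $\Lambda_n$ is genuinely the unique maximizer up to its free parameter, which should follow because equality in $\int \le \max$ forces the output purity to be constant on pure states, a condition characterizing depolarizing behaviour.
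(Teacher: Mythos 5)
Your strategy is, in substance, the paper's own proof: express the output purity and the map purity through a Kraus decomposition, average the output purity over pure input states with the second-moment (twirl) formula, and compare that average with its maximum $\mu$. The genuine gap is that your pivotal ``exact relation'' is false for general channels. Carrying out the twirl for $\Phi(\rho)=\sum_\alpha K_\alpha\rho K_\alpha^\dagger$ gives
\begin{equation*}
\int \tr\Phi(|\psi\>\<\psi|)^2\,d\psi
=\frac{1}{n(n+1)}\Bigl(\sum_{\alpha,\beta}\bigl|\tr K_\alpha^\dagger K_\beta\bigr|^2+\tr\bigl[\Phi(\idty)^2\bigr]\Bigr)
=\frac{n\,\tr(\sigma^\Phi)^2}{n+1}+\frac{\tr\bigl[\Phi(\idty)^2\bigr]}{n(n+1)},
\end{equation*}
and the second term is \emph{not} fixed by trace preservation: that constraint only fixes $\tr\Phi(\idty)=n$, whereas $\tr[\Phi(\idty)^2]=n$ holds if and only if $\Phi(\idty)=\idty$, i.e.\ if and only if $\Phi$ is unital. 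So your identity, with the constant $1$ in the numerator, is exact only for bistochastic channels. Since the depolarizing channels are unital, your saturation check cannot detect this error, but the proposition concerns all channels, including non-unital ones, for which your claimed identity simply fails.

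Fortunately the gap closes in one line, and in exactly the way the paper does it: by the Schwarz inequality, $\tr[\Phi(\idty)^2]\ge \bigl(\tr\Phi(\idty)\bigr)^2/n=n$, and since this term enters the average with a positive sign, the chain
\begin{equation*}
\mu \;\ge\; \int \tr\Phi(|\psi\>\<\psi|)^2\,d\psi \;\ge\; \frac{n\,\tr(\sigma^\Phi)^2+1}{n+1}
\end{equation*}
still holds and yields $\tr(\sigma^\Phi)^2\le\bigl((n+1)\mu-1\bigr)/n$, which is the paper's inequality~(\ref{imp}). You must therefore replace ``a term fixed by trace preservation'' by this Cauchy--Schwarz estimate; as written, that step asserts something false rather than omitting a detail. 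A minor further point: your closing uniqueness claim (constant output purity ``characterizing depolarizing behaviour'') is both unnecessary for the proposition and inaccurate as stated --- for instance, a unitary conjugate of $\Lambda_n$ also has constant output purity without belonging to the family~(\ref{depol}); all the proposition needs is that $\Lambda_n$ attains equality, which your direct check via~(\ref{smin}) and~(\ref{smap}) already establishes.
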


\begin{proof}
Putting $\s S^{\r{min}}_2(\Lambda_n) = - \log(1 - \epsilon)$
\begin{equation}
\s S^{\r{map}}_2(\Lambda_n) = -\log \Bigl( 1 - \frac{\epsilon(n+1)}{n} \Bigr).
\end{equation}
The aim is to prove that the map entropy of a quantum operation $\Phi$ on $\c D_n$  is not less than the map entropy of a depolarizing channel with the same minimal Renyi output entropy. Equivalently we want to show that
\begin{equation}
\tr \Bigl( \Phi(|\varphi\> \<\varphi|) \Bigr)^2 \le 1-\epsilon \ \Longrightarrow\  \tr \bigl(\sigma^\Phi \bigr)^2 \le 1 - \frac{\epsilon(n+1)}{n},
\label{imp}
\end{equation}
where $D^\Phi = n \sigma^\Phi$ is the Choi matrix of $\Phi$.

Using a Kraus decomposition
\begin{equation}
\Phi(\rho) = \sum_\alpha K_\alpha \rho K_\alpha^\dagger,\ \sum_\alpha K_\alpha^\dagger K_\alpha = \idty
\end{equation}
we find
\begin{equation}
\tr \Phi(|\varphi\> \<\varphi|)^2 = \sum_{\alpha,\beta} \< \varphi \otimes \varphi \,,\,  K_\alpha^\dagger K_\beta \otimes K_\beta^\dagger K_\alpha\, \varphi \otimes \varphi \>
\end{equation}
and
\begin{equation}
\tr \bigl( \sigma^\Phi \bigr)^2 = \frac{1}{n^2}\, \sum_{\alpha,\beta} \bigl| \tr K_\alpha K_\beta^\dagger \bigr|^2.
\end{equation}

Now we use the following result: let $\mu$ be the Haar measure on the unitary matrices $\c U_n$ of dimension $n$ and let $A$ be a matrix of dimension $n^2$, then~\cite{thir}
\begin{equation}
\int_{\c U_n} \!\mu(dU)\, U \otimes U\, A\, U^\dagger \otimes U^\dagger = \Bigl( \frac{\tr A}{n^2-1} - \frac{\tr AF}{n(n^2-1)} \Bigr)\, \idty - \Bigl( \frac{\tr A}{n(n^2-1)} - \frac{\tr AF}{n^2-1} \Bigr)\, F.
\end{equation}
Here $F$ denotes the swap operation: $F(\varphi \otimes \psi) = \psi \otimes \varphi$. We apply this result to find
\begin{equation}
\int_{\c U_n} \!\mu(dU)\, \< U\varphi \otimes U\varphi \,,\, A\, (U\varphi \otimes U\varphi) \> = \frac{1}{n(n+1)}\, ( \tr A + \tr AF).
\end{equation}
This allows us to write the inequality
\begin{equation}
\frac{1}{n(n+1)}\, \sum_{\alpha,\beta} \bigl( \bigl| \tr K_\alpha^\dagger K_\beta \bigr|^2 + \tr K_\alpha K_\alpha^\dagger K_\beta K_\beta^\dagger \bigr) \le 1 - \epsilon.
\label{aver}
\end{equation}
Now, by Schwarz's inequality for the Hilbert-Schmidt inner product
\begin{equation}
n^2 = \Bigl( \tr \sum_\alpha K_\alpha K_\alpha^\dagger \Bigr)^2 \le n\, \sum_{\alpha,\beta} \tr K_\alpha K_\alpha^\dagger K_\beta K_\beta^\dagger
\end{equation}
and~(\ref{aver}) implies
\begin{equation}
\frac{1}{n^2}\, \sum_{\alpha,\beta} \bigl|\tr K_\alpha^\dagger K_\beta \bigr|^2  \le 1 - \frac{\epsilon(n+1)}{n},
\end{equation}
which proves~(\ref{imp}).
\end{proof}

For any dimension $n \ge 2$ the minimal second Renyi output entropy of a depolarizing channel is a continuous, monotonically increasing, and concave function of its map entropy on the entire domain of the map entropy:
\begin{equation}
\s S_2^{\r{min}} \Big( \s S_2^{\r{map}}(\Lambda_n) \Big) = -\log\Big( \frac{1+n\r e^{-\s S_2^{\r{map}}(\Lambda_n)}}{n+1} \Big).
\end{equation}
This implies that the following statement is also true: among all maps of a same map entropy of order two the depolarizing channel has the largest minimal output entropy. In other words, representing in the $\big( \s S_2^{\r{map}}(\Phi),\s S_2^{\r{min}}(\Phi) \big)$-plane the set of all quantum operations, there are no points above the line corresponding to the depolarizing channels. This result holds in any dimension.

\subsection{Additivity of the map entropy}
\label{sec1}

\begin{proposition}
\label{addit}
Let $\Phi_1$ and $\Phi_2$ be trace preserving, completely positive maps. For any $q\ge 0$ the Renyi map entropy satisfies the additivity relation:
\begin{equation}
\s S^{\r{map}}_q(\Phi_1 \otimes \Phi_2) = \s S^{\r{map}}_q(\Phi_1) + \s S^{\r{map}}_q(\Phi_2).
\end{equation}
\end{proposition}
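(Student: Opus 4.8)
The plan is to reduce the claim to the elementary fact that the Renyi entropy is additive on product states, by showing that the Jamio{\l}kowski state of a tensor product of channels is unitarily equivalent to the tensor product of the individual Jamio{\l}kowski states. Since the Renyi entropy depends only on the spectrum, a unitary equivalence costs nothing, and the whole proposition collapses to the multiplicativity of $\tr(\,\cdot\,)^q$ across a tensor product.

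First I would write out $\sigma^{\Phi_1 \otimes \Phi_2}$ explicitly. Suppose $\Phi_1$ acts on $\c D_{n_1}$ and $\Phi_2$ on $\c D_{n_2}$, so that $\Phi_1 \otimes \Phi_2$ acts on $\c D_{n_1 n_2}$. Writing the computational basis of $\Cx^{n_1 n_2}$ as $|ab\> = |a\> \otimes |b\>$ with $a \in \{1,\dots,n_1\}$ and $b \in \{1,\dots,n_2\}$, the maximally entangled state on the doubled space $\c H_A \otimes \c H_B$ factorizes,
\begin{equation}
|\phi_+\> = \frac{1}{\sqrt{n_1 n_2}}\, \sum_{a,b} |ab\>_A \otimes |ab\>_B,
\end{equation}
which, after reordering the four tensor factors $A_1 A_2 B_1 B_2$ into $A_1 B_1 A_2 B_2$, is exactly the product $|\phi_+^{(1)}\> \otimes |\phi_+^{(2)}\>$ of the two maximally entangled states living in $\c H_{A_1} \otimes \c H_{B_1}$ and $\c H_{A_2} \otimes \c H_{B_2}$.

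The central step is to track this reordering. Denoting by $P$ the permutation unitary implementing $A_1 A_2 B_1 B_2 \to A_1 B_1 A_2 B_2$, and using that $\Phi_1$ acts only on $A_1$ and $\Phi_2$ only on $A_2$, I would verify that the tensor-product superoperator $(\Phi_1 \otimes \Phi_2) \otimes \id$ commutes with conjugation by $P$ in the sense that it merely redistributes the single-factor maps over the reordered factors. Combining this with the factorization of $|\phi_+\>$ yields
\begin{equation}
\sigma^{\Phi_1 \otimes \Phi_2} = P\, \bigl( \sigma^{\Phi_1} \otimes \sigma^{\Phi_2} \bigr)\, P^\dagger.
\end{equation}
Because $P$ is unitary, the two operators share the same spectrum, so $\tr(\sigma^{\Phi_1 \otimes \Phi_2})^q = \tr(\sigma^{\Phi_1})^q\, \tr(\sigma^{\Phi_2})^q$ for every $q \ge 0$. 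Substituting into the definition of the map entropy and using $\log(xy) = \log x + \log y$ then gives the additivity relation directly.

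The main obstacle is purely bookkeeping: one must check carefully that $\Phi_1 \otimes \Phi_2$ couples the $A$-factors in a way that respects the splitting into $A_1$ and $A_2$, and that the permutation $P$ can genuinely be pulled through $(\Phi_1 \otimes \Phi_2) \otimes \id$ so as to land on $\sigma^{\Phi_1} \otimes \sigma^{\Phi_2}$. Once the factorization of $|\phi_+\>$ and this commutation are in place, the remainder is immediate, and notably no positivity, trace-preservation, or complete-positivity hypothesis beyond what guarantees that each $\sigma^{\Phi_i}$ is a well-defined matrix is actually invoked; the spectral identity holds at the level of the underlying matrices, which is why the statement holds for all $q$.
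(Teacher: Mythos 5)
Your proposal is correct and follows essentially the same route as the paper: both proofs establish that $\sigma^{\Phi_1 \otimes \Phi_2}$ is unitarily equivalent to $\sigma^{\Phi_1} \otimes \sigma^{\Phi_2}$ via the permutation that swaps the second and third tensor factors, and then conclude additivity from invariance of the spectrum. The only difference is stylistic: the paper verifies this equivalence by an explicit entrywise computation using the transfer matrix $\hat\Phi$ and the reshuffling relation, whereas you obtain it more structurally from the factorization $P|\phi_+\> = |\phi_+^{(1)}\> \otimes |\phi_+^{(2)}\>$ together with the covariance of $(\Phi_1 \otimes \Phi_2) \otimes \id$ under conjugation by $P$.
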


\begin{proof}
We show that $D^{\Phi_1 \otimes \Phi_2}$ is unitarily equivalent with $D^{\Phi_1} \otimes D^{\Phi_2}$ from which additivity of the map entropies follows. To do so, it is convenient to equip the $n$-dimensional matrices with the Hilbert-Schmidt inner product
\begin{equation}
\< A \,,\, B \>_{\b h} := \tr A^\dagger B.
\end{equation}
In this space the matrix units $\bigl\{ |i\> \<j| \,\bigm|\, i,j = 1,2,\ldots,n\bigr\}$ form an orthonormal basis. We use the notation $|i\> \<j| := |ij\>_{\b h}$. A channel $\Phi$ is now represented by a matrix $\hat\Phi$:
\begin{equation}
\<ij \,,\, \hat\Phi\, k\ell\>_{\b h} = \tr  \Bigl( |j\> \<i|\, \Phi(|k\> \<\ell|) \Bigr),
\end{equation}
hence
\begin{equation}
\Phi(|k\> \<\ell|) = \sum_{i,j} \<ij \,,\, \hat\Phi\, k\ell\>_{\b h}\, |i\> \<j|.
\label{rozklad}
\end{equation}
Therefore, the entries of the dynamical matrix~(\ref{dyn}) can be obtained by permuting the entries of the matrix $\hat\Phi$:
\begin{equation}
\<ab \,,\, D^\Phi\, cd\>_{\b h} = \<ac \,,\, \hat\Phi\, bd\>_{\b h}.
\label{tas}
\end{equation}
We define an unnormalized maximally entangled state $|\Psi_+\> := \sum_{i,\ell} |i\ell\> \otimes |i\ell\>$ and compute directly the entries of $D^{\Phi_1 \otimes \Phi_2}$:
\begin{align}
\, \<abcd \,,\, D^{\Phi_1 \otimes \Phi_2}\, efgh \>
&= \<abcd \,,\, \bigl[ (\Phi_1 \otimes \Phi_2) \otimes \id \bigr] \bigl( |\Psi_+\> \<\Psi_+| \bigr)\, efgh\>
\nonumber \\
&= \sum_{i,\ell,j,m} \<abcd \,,\, \bigl[ (\Phi_1 \otimes \Phi_2)(|i\ell\> \<jm|) \otimes |i\ell\> \<jm| \bigr]\, efgh\>.
\label{rez}
\end{align}
Now we use~(\ref{rozklad}) and obtain:
\begin{equation}
\<abcd \,,\, D^{\Phi_1 \otimes \Phi_2}\, efgh\> = \sum_{\alpha,\beta,\gamma,\delta} \<\alpha\beta \,,\, \widehat{\Phi_1}\, ij\>_{\b h}\, \<\gamma\delta \,,\, \widehat{\Phi_2}\, ij\>_{\b h}\, \<abcd \,,\, \alpha\gamma i\ell\>\, \<\beta\delta jm \,,\, efgh\>.
\end{equation}
After summation over the Greek indices we get:
\begin{align}
\<abcd \,,\, D^{\Phi_1 \otimes \Phi_2}\, efgh\>
&= \<ac \,,\, D^{\Phi_1}\, eg\>\, \<bd \,,\, D^{\Phi_2}\, fh\>
\nonumber \\
&= \<acbd \,,\, D^{\Phi_1} \otimes D^{\Phi_2}\, egfh\>.
\end{align}
The matrix $D^{\Phi_1 \otimes\Phi_2}$ is not equal to $D^{\Phi_1} \otimes D^{\Phi_2}$. However, both are related by a unitary permutation matrix which exchanges the second and the third indices: $U = \sum_{a,b,c,d} |abcd\> \<acbd|$. Therefore both matrices have the same spectra and hence the same entropies.
\end{proof}

We present some applications of Propositions~\ref{prop:depol} and~\ref{addit} in the next section.

\section{Implications on minimal output entropy}
\label{sec4}

The additivity conjecture states that sending an entangled state through a product channel $\Phi_1 \otimes \Phi_2$ yields an output state with entropy not less than the smallest output entropy of input states with a  product structure. A counterexample to this conjecture was given e.g.\ in~\cite{hastings}, where estimating the entropy of an output state arising from a maximally entangled input state plays an important role. This convinces us that it is useful to find methods for estimating the output entropy of maximally entangled input states. We use entropic characteristics to provide a typical estimation. We also use the propositions of the previous section to characterize a class of channels for which we conjecture additivity of minimal output entropy.

\begin{proposition}
\label{lind}
For any maximally entangled state $|\psi_+\>$ the following inequality for von~Neumann entropy holds:
\begin{equation}
\bigl| \s S^{\r{map}}(\Phi_1) - \s S^{\r{map}}(\Phi_2) \bigr| \le  S\bigl( (\Phi_1 \otimes \Phi_2) (|\psi_+\> \<\psi_+|) \bigr) \le \s S^{\r{map}}(\Phi_1) + \s S^{\r{map}}(\Phi_2).
\label{lindblad2}
\end{equation}
\end{proposition}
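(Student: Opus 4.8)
The plan is to exhibit $\omega := (\Phi_1\otimes\Phi_2)(|\psi_+\>\<\psi_+|)$ as the reduction of a global pure state and then to invoke the two standard bipartite inequalities for the von~Neumann entropy: subadditivity and the Araki--Lieb triangle inequality. Fix Kraus decompositions $\Phi_1(\rho)=\sum_\alpha K_\alpha\rho K_\alpha^\dagger$ and $\Phi_2(\rho)=\sum_\beta L_\beta\rho L_\beta^\dagger$, and dilate each channel by a Stinespring isometry, $V_1|\xi\>=\sum_\alpha K_\alpha|\xi\>\otimes|\alpha\>_{E_1}$ and $V_2|\eta\>=\sum_\beta L_\beta|\eta\>\otimes|\beta\>_{E_2}$. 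Then $|\Omega\>:=(V_1\otimes V_2)|\psi_+\>$ is a unit vector on $\c H_A\otimes E_1\otimes\c H_B\otimes E_2$, and tracing out the two environments returns $\omega$. Because $|\Omega\>\<\Omega|$ is pure, the entropy of $\omega$ equals that of the complementary marginal on the environments, $S(\omega)=S(\rho_{E_1E_2})$ with $\rho_{E_1E_2}:=\tr_{A B}|\Omega\>\<\Omega|$.

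The heart of the argument is to identify the single-environment marginals. Computing $\rho_{E_1}=\tr_{A B E_2}|\Omega\>\<\Omega|$ and using, first, that $|\psi_+\>$ is maximally entangled so that its reduction to the input of $\Phi_1$ equals $\rho_*=\tfrac1n\idty$, and, second, the trace-preservation relation $\sum_\beta L_\beta^\dagger L_\beta=\idty$ for the \emph{other} channel, all cross terms in $E_2$ collapse and one is left with $\<\alpha|\rho_{E_1}|\alpha'\>=\tfrac1n\tr(K_{\alpha'}^\dagger K_\alpha)$. This is exactly the environment marginal produced when the single channel $\Phi_1$ alone is dilated on $|\phi_+\>$; since that smaller dilation is pure and its system marginal is the Jamio{\l}kowski state $\sigma^{\Phi_1}$, we obtain $S(\rho_{E_1})=S(\sigma^{\Phi_1})=\s S^{\r{map}}(\Phi_1)$, and symmetrically $S(\rho_{E_2})=\s S^{\r{map}}(\Phi_2)$.

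With these identifications the proposition is immediate. Applying subadditivity to $\rho_{E_1E_2}$ gives $S(\rho_{E_1E_2})\le S(\rho_{E_1})+S(\rho_{E_2})$, which is the upper bound, while the Araki--Lieb inequality gives $S(\rho_{E_1E_2})\ge|S(\rho_{E_1})-S(\rho_{E_2})|$, the lower bound; substituting $S(\rho_{E_1E_2})=S(\omega)$, $S(\rho_{E_1})=\s S^{\r{map}}(\Phi_1)$ and $S(\rho_{E_2})=\s S^{\r{map}}(\Phi_2)$ yields~(\ref{lindblad2}).

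The step I expect to be most delicate is the marginal computation of the previous paragraph, and in particular recognizing that the map entropy of $\Phi_i$ is carried by the \emph{environment} marginal of the combined dilation. What makes it work is that trace-preservation of the partner channel decouples the two environments at the level of the reduced states, while maximal entanglement of $|\psi_+\>$ forces the relevant input marginal to be $\rho_*$ --- precisely the input on which the map entropy is defined. For a generic, non-maximally-entangled input this identification breaks down, so the use of maximal entanglement is essential rather than cosmetic. Once the marginals are pinned down, the two entropy inequalities are standard and require no further work.
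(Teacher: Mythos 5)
Your proof is correct, and it takes a genuinely different (though closely related) route from the paper's. The paper treats the two channels asymmetrically: it first observes that $(\id\otimes\Phi_2)(|\psi_+\>\<\psi_+|)=\sigma^{\Phi_2}$, then applies Lindblad's inequality~(\ref{lili}) as a black box to the channel $\Phi_1\otimes\id$ acting on the input $\sigma^{\Phi_2}$, and finally identifies the exchange entropy $\s S\bigl(\varsigma(\Phi_1\otimes\id,\sigma^{\Phi_2})\bigr)$ with $\s S^{\r{map}}(\Phi_1)$; this last step requires the slightly delicate argument that a purification of $\sigma^{\Phi_2}$ is a purification of $\tr_2\sigma^{\Phi_2}=\rho_*$ and that $\varsigma(\Phi_1,\rho_*)=\sigma^{\Phi_1}$. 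You instead dilate both channels simultaneously and symmetrically, transfer the output entropy to the joint environment marginal by purity of $|\Omega\>$, compute the single-environment marginals as the Gram matrices $\tfrac1n\tr(K_{\alpha'}^\dagger K_\alpha)$ and $\tfrac1n\tr(L_{\beta'}^\dagger L_\beta)$, and conclude with subadditivity and Araki--Lieb. In substance the two arguments rest on the same facts: Lindblad's inequality is itself proved by precisely your purification-plus-subadditivity/triangle-inequality scheme, so your proof amounts to inlining the paper's key lemma, specialized to the case at hand. What your version buys is self-containedness and symmetry --- it makes completely explicit where maximal entanglement (forcing both input marginals to be $\rho_*$) and trace preservation of the partner channel (decoupling the environment Gram matrices) enter, which is exactly the point you flag as delicate. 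What the paper's version buys is brevity and an immediate generalization: since Lindblad-type inequalities follow from subadditivity alone, the paper can reuse the identical skeleton for the subadditive Tsallis 2-entropy right after the proposition, which is how the bound~(\ref{tsal}) is obtained; with your approach that extension would require rechecking that the purity-complementarity step and both entropy inequalities survive for the entropy in question.
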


\begin{proof}
Lindblad's inequality~\cite{lindblad} states that
\begin{equation}
\bigl| \s S(\rho) - \s S(\varsigma(\Phi,\rho)) \bigr| \le \s S(\Phi(\rho)) \le \s S(\rho) + \s S(\varsigma(\Phi,\rho)),
\label{lili}
\end{equation}
where the state $\varsigma(\Phi,\rho)$ is the output state of the channel $\Phi\otimes\id$ acting on a purification of $\rho$. The quantity $\s S(\varsigma(\Phi,\rho))$ is called the exchange entropy and does not depend on the chosen purification. We apply Lindblad's inequality to
\begin{equation}
\s S\bigl( (\Phi_1 \otimes \Phi_2) (|\psi_+\> \<\psi_+|) \bigr) = \s S\Bigl( (\Phi_1 \otimes \id) \bigl( (\id \otimes \Phi_2) (|\psi_+\> \<\psi_+|) \bigr) \Bigr).
\end{equation}
Note that by the definition of the dynamical matrix one has 
\begin{equation*}
(\id \otimes \Phi_2) (|\psi_+\> \<\psi_+|) = \sigma^{\Phi_2}=D^{\Phi_2}/n.
\end{equation*}
We get
\begin{equation}
\bigl| \s S^{\r{map}}(\Phi_2) - \s S\bigl( \varsigma(\Phi_1\otimes\id,\, \sigma^{\Phi_2}) \bigr) \bigr| \le
\s S\bigl( (\Phi_1 \otimes \Phi_2) (|\psi_+\> \<\psi_+|) \bigr) \le \s S^{\r{map}}(\Phi_2) + \s S\bigl( \varsigma(\Phi_1\otimes\id,\, \sigma^{\Phi_2}) \bigr).
\end{equation}
The exchange entropy $\s S\bigl( \varsigma(\Phi_1\otimes\id,\, \sigma^{\Phi_2}) \bigr)$ is equal to $\s S\bigl( \varsigma(\Phi_1,\, \tr_2 \sigma^{\Phi_2}) \bigr)$ because a purification of $\sigma^{\Phi_2}$ is a special case of a purification of $\tr_2\sigma^{\Phi_2}$. Because $\Phi_2$ is a trace preserving map $\tr_2\sigma^{\Phi_2} = \rho_*$~\cite{KZ}. Moreover, $\varsigma(\Phi_1,\, \rho_*) = \sigma^{\Phi_1}$. This completes the proof.
\end{proof}

Since Lindblad's inequality~(\ref{lili}) is based on subadditivity of entropy, Proposition~\ref{lind} can be generalized to other entropies which satisfy this property. Renyi entropy of order $2$ is not subadditive in contrast to Tsallis $q$-entropy:
\begin{equation}
\s T_q(\rho) := \frac{1}{(1-q)}\, \bigl( \tr \rho^q - 1 \bigr),
\end{equation}
which is sub-additive for $q>1$~\cite{audenaert}. As the Tsallis 2-entropy is a function of the Renyi 2-entropy, the Tsallis 2-version of the lower bound in~(\ref{lindblad2}) also yields a lower bound for the Renyi 2-entropy $\s S_2$ of a product channel acting on a maximally mixed initial state:
\begin{equation}
-\log \Bigl( 1 - \bigl| \r e^{-\s S_2^{\r{map}}(\Phi_1)} - \r e^{-\s S_2^{\r{map}}(\Phi_2)} \bigr| \Bigr) \le \s S_2\bigl( (\Phi_1\otimes\Phi_2) (|\psi_+\> \<\psi_+|) \bigr).
\label{tsal}
\end{equation}

We can now characterize pairs of channels $\Phi_1$ on $\c D_n$ and $\Phi_2$ on $\c D_m$ for which the maximally entangled state is certainly not the minimizer of the output Renyi entropy $\s S_2$. Although this is not a necessary condition for channels for which the additivity holds, it suggests pairs of maps for which additivity may hold. The maximally entangled state is certainly not the minimizer of $\rho \mapsto \s S_2\bigl( (\Phi_1\otimes\Phi_2) (\rho) \bigr)$ if the lower bound in~(\ref{tsal}) is larger than the minimal output entropy  of a depolarizing channel~$\Lambda_{nm}$ which satisfies $\s S_2^{\r{map}}(\Lambda_{nm}) = S_2^{\r{map}}(\Phi_1\otimes\Phi_2)$. A sufficient, but not necessary, condition on pairs of channels $(\Phi_1, \Phi_2)$ for which a maximally entangled state is not the minimizer of the output entropy can be written using~(\ref{smin}) and~(\ref{smap})
\begin{equation}
1 - \frac{nm+1}{nm}\, \bigl| \r e^{-\s S_2^{\r{map}}(\Phi_1)} - \r e^{-\s S_2^{\r{map}}(\Phi_2)} \bigr| \le \r e^{-\s S_2^{\r{map}}(\Phi_1 \otimes \Phi_2)} = \r e^{-\bigl( \s S_2^{\r{map}}(\Phi_1) + \s S_2^{\r{map}}(\Phi_2) \bigr)}.
\label{ineq:cond}
\end{equation}

\begin{figure}[h]
\centering
\scalebox{0.8}{\includegraphics{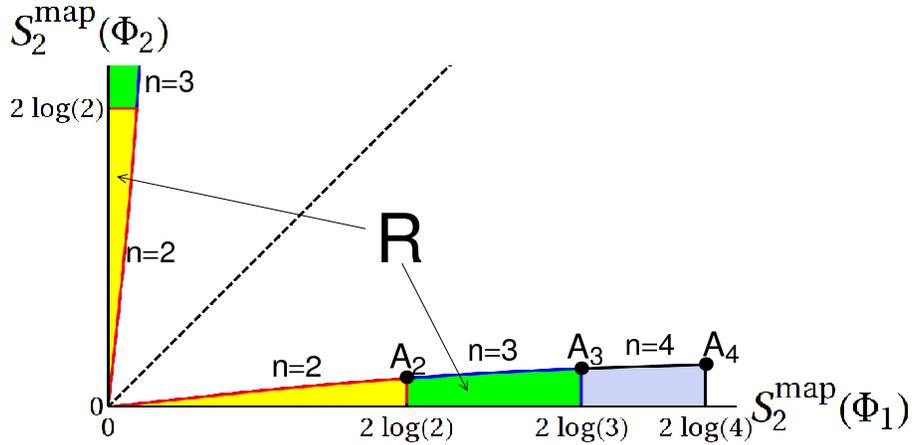}}
\caption{$R$ is the region in the $\bigl( \s S_2^{\r{map}}(\Phi_1), \s S_2^{\r{map}}(\Phi_2) \bigr)$-plane for which additivity of minimal output entropy may hold. Its boundary is determined by inequality~(\ref{ineq:cond}) with $m=n$. It characterizes a class of channels $\Phi_1 \otimes \Phi_2$ for which any maximally entangled input state does not decrease the output entropy $\s S_2$ below the smallest value obtained by states with a tensor product structure. For higher dimensions the allowed region may be enlarged. The dashed line characterizes pairs of complex conjugate channels used in~\cite{hastings} to show violation of additivity of $\s S^{\r{min}}$.}
\label{fig:wykres}
\end{figure}

In Fig.~\ref{fig:wykres} the region $\bigl(\s S_2^{\r{map}}(\Phi_1), \s S_2^{\r{map}}(\Phi_2))$ wherein inequality~(\ref{ineq:cond}) holds is plotted for $m = n = 2,3,4$. For any channel one can choose another one of sufficiently small $\s S_2^{\r{map}}$ to obtain a pair for which no maximally entangled state minimizes the output entropy. For such pairs of channels we may thus conjecture additivity of the minimum output entropy. The map entropy provides only sufficient information to recognize whether two channels belong to this set. The set $R$, for which additivity of $\s S_2^{\r{min}}$ can be conjectured, consists of two regions close the axes and is symmetric with respect to the diagonal. It consists of pairs of maps such that the decoherence induced by one map, as measured by the entropy, is much smaller than the decoherence induced by the other one:
\begin{equation}
\s S_2^{\r{map}}(\Phi_2) \le \alpha_n \s S_2^{\r{map}}(\Phi_1).
\end{equation}
The coefficient
\begin{equation}
\alpha_n := \frac{1}{2\log n}\, \log\Bigl( \frac{n^2(n^2+2)}{n^2(n^2+1)+1} \Bigr)
\end{equation}
is the slope of the line joining the origin with the point $A_n$ from the boundary of $R$ such that $\s S_2^{\r{map}} = 2 \log n$. The counterexamples to additivity used in~\cite{hastings} are conjugated channels, they have therefore a same map entropy and belong to the diagonal $\s S_2^{\r{map}}(\Phi_1) = \s S_2^{\r{map}}(\Phi_2)$ in Fig.~\ref{fig:wykres}. Note that for large $n$ the coefficient $\alpha_n$ tends to zero implying that the maps for which additivity may hold are atypical.

\section{Qubit maps}

Proposition~\ref{prop:depol} determines the upper boundary of the projection of the set of all quantum channels on the $\bigl( \s S_2^{\r{map}}, \s S_2^{\r{min}} \bigr)$-plane. For bistochastic qubit maps the remaining boundaries correspond to quantum maps at the edges of the tetrahedron of bistochastic qubit channels.

\subsection{The asymmetrical tetrahedron of bistochastic qubit maps}

Consider the set of all bistochastic quantum channels on $\c D_2$. Up to two unitary rotations, they are convex combinations of unitary channels determined by Pauli operators and are therefore called \emph{Pauli channels}:
\begin{equation}
\Phi_{\vec{b}}(\rho) = \sum_{i=0}^3 b_i\, \sigma_i\rho\sigma_i,\enskip b_i \ge 0, \enskip\text{and}\enskip \sum_{i=0}^3 b_i = 1.
\label{paulich}
\end{equation}
Here $\{\sigma_i \mid i = 0,1,2,3 \}$ denotes the identity matrix and the three Pauli matrices. This convex set is a tetrahedron and its four vertices correspond to the identity and to three unitary rotations generated by Pauli matrices with respect to three perpendicular axes. The set is shown in Fig.~\ref{fig:czworoscian}~$a)$.

\begin{figure}[h]
\centering
\scalebox{0.6}{\includegraphics{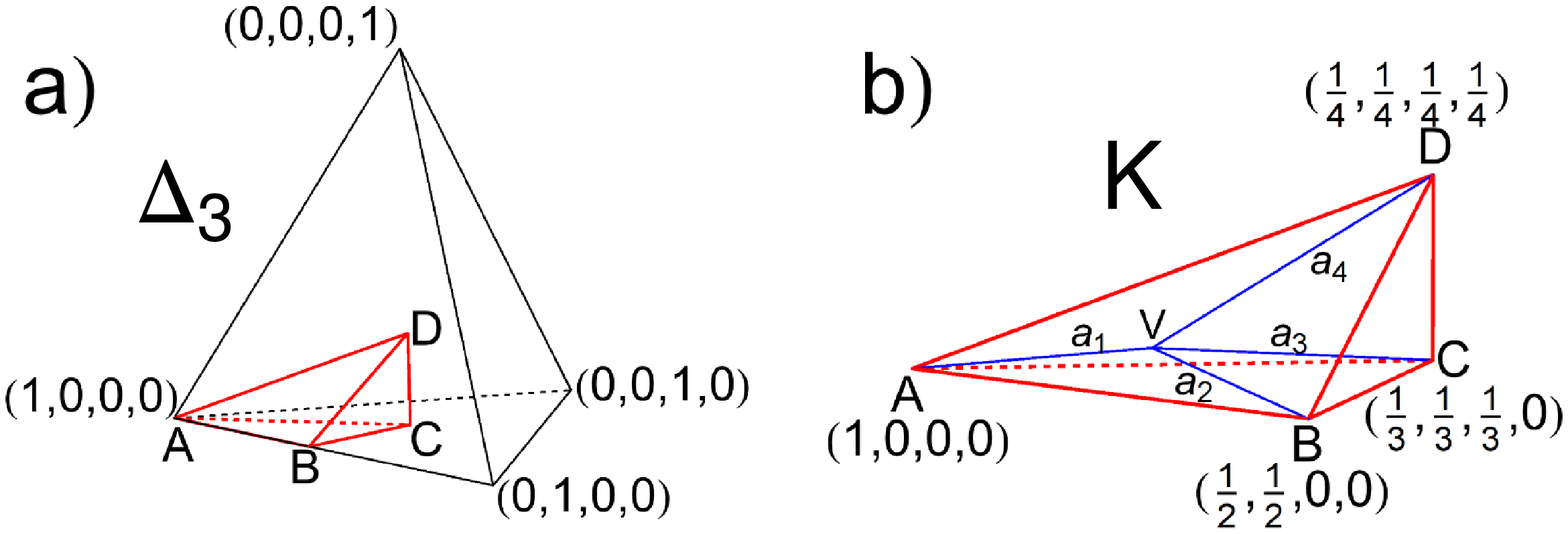}}
\caption{$a)$ The tetrahedron $\Delta_3$ is the set of bistochastic qubit channels. An asymmetric tetrahedron $K$ inside $\Delta_3$ is magnified in panel $b)$. Any point in $K$ is a convex combination~(\ref{eigen}) of the vertices $W_i$ with weights $a_i$.}
\label{fig:czworoscian}
\end{figure}

Using an appropriate permutation of the vertices of the tetrahedron we may restrict our attention to the asymmetric part $K$ of $\Delta_3$, defined as the convex hull of four vectors:
\begin{equation}
\begin{split}
&A = W_1 = (1,0,0,0), \\
&B = W_2 = (\tfrac{1}{2},\tfrac{1}{2},0,0), \\
&C = W_3 = (\tfrac{1}{3},\tfrac{1}{3},\tfrac{1}{3},0),\ \text{and} \\
&D = W_4 = (\tfrac{1}{4},\tfrac{1}{4},\tfrac{1}{4},\tfrac{1}{4}).
\end{split}
\end{equation}
Any point $V\in K$ is a convex combination of the vertices $W_i$
\begin{equation}
V = \sum_i a_i W_i,\enskip a_i \ge 0\ \forall_i, \enskip\text{and}\enskip \sum_i a_i = 1.
\label{eigen}
\end{equation}
The channel corresponding to a point inside $K$ transforms the Bloch ball into an ellipsoid with axes of ordered lengths $\abs{\lambda_1} \le \abs{\lambda_2} \le \abs{\lambda_3} \le 1$, see the Appendix. Comparing~(\ref{eigen}) with~(\ref{uklad}) yields the coefficients $a_i$ in terms of the $\lambda_i$:
\begin{equation}
\begin{split}
&a_1 = \tfrac{1}{2}\, (\lambda_1 + \lambda_2), \\
&a_2 = (\lambda_3 - \lambda_2), \\
&a_3 = \tfrac{3}{2}\, (\lambda_2 - \lambda_1),\ \text{and} \\
&a_4 = (1 + \lambda_1 - \lambda_2 - \lambda_3).
\end{split}
\end{equation}
The extreme points $W_i$ represent quantum maps of different ranks: identity ($a_1=1$, all $\lambda_i=1$), coarse graining ($a_2=1$, $\lambda_3=1$, $\lambda_1=\lambda_2=0$), a depolarizing channel ($a_3=1$, $-\lambda_1=\lambda_2=\lambda_3=\frac{1}{3}$) and the completely depolarizing channel ($a_4=1$, $\lambda_1=\lambda_2=\lambda_3=0$). The convex combinations of these four maps exhaust all possible shapes of ellipsoids which are images of the Bloch ball under bistochastic channels, see the Appendix. Due to the specific choice of $K$ the longest axis of the ellipsoid is parallel to the $z$-axis. The position of the ellipsoid with respect to the axes of the Bloch ball has, however, no influence neither on the minimal output entropy nor on the  map entropy.

The minimal output entropy of the quantum map $\Phi$ corresponding to a given point $V$ is
\begin{equation}
\s S_2^{\r{min}}(\lambda_1,\lambda_2,\lambda_3) = -\log \tfrac{1}{2}\, (1 + \lambda_3^2).
\end{equation}
As $V$ is a vector of eigenvalues of the dynamical matrix $D^\Phi$, see the Appendix, the Renyi entropy of $\Phi$ equals
\begin{equation}
\s S_2^{\r{map}}(\Phi) = -\log \sum_{i=0}^3 \abs{b_i} = -\log \norm{V}^{2}.
\end{equation}
Both these entropies depend only on the values $\lambda_i$. The lines representing the edges of the asymmetric tetrahedron in the $\bigl( \s S_2^{\r{map}}, \s S_2^{\r{min}} \bigr)$-plane are shown in Fig.~\ref{fig:reni2} and we show in the next section that they correspond indeed to the boundaries of the allowed region in the entropy plane $\bigl( \s S_2^{\r{map}}, \s S_2^{\r{min}} \bigr)$.

\subsection{The extremal values of minimal output entropy}

\begin{figure}[h]
\centering
\scalebox{0.6}{\includegraphics{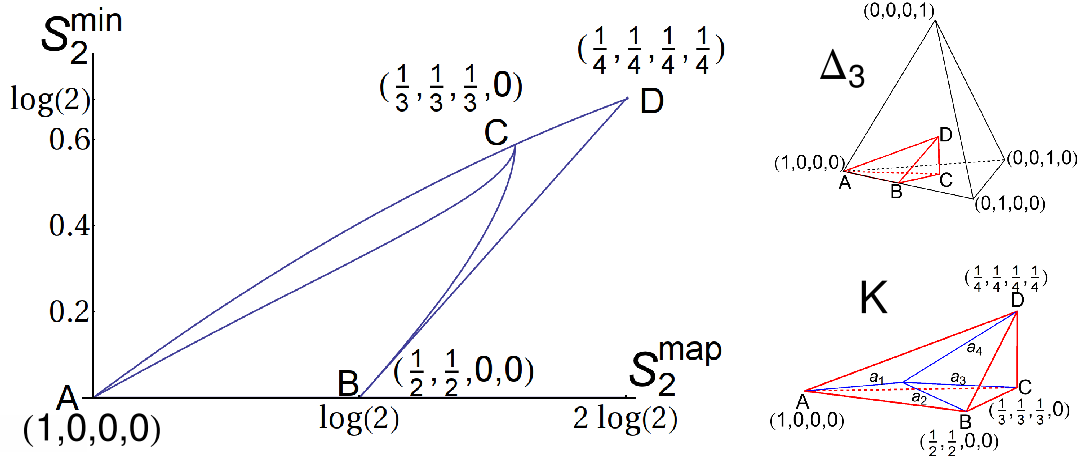}}
\caption{Boundaries of the set $\Delta_3$ of bistochastic qubit channels projected on the $\bigl( \s S_2^{\r{map}}, \s S_2^{\r{min}})$-plane, the tetrahedron $\Delta_3$, and its asymmetric part $K$.}
\label{fig:reni2}
\end{figure}

\begin{proposition}
The boundaries of the set of Pauli channels projected on the $\bigl( \s S_2^{\r{map}}, \s S_2^{\r{min}} \bigr)$-plane correspond to the edges of the asymmetrical tetrahedron $K \subset \Delta_3$.
\end{proposition}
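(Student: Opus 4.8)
The plan is to pass to the exponentiated coordinates $u := \r e^{-\s S_2^{\r{map}}}$ and $v := \r e^{-\s S_2^{\r{min}}}$. Since $t \mapsto -\log t$ is a monotone bijection, boundaries in the $\bigl( \s S_2^{\r{map}}, \s S_2^{\r{min}} \bigr)$-plane correspond to boundaries in the $(u,v)$-plane. For a point $V \in K$ with parameters $\lambda_1, \lambda_2, \lambda_3$ one has the explicit expressions $v = \tfrac12(1 + \lambda_3^2)$ and $u = \tfrac14(1 + \lambda_1^2 + \lambda_2^2 + \lambda_3^2)$. The crucial structural fact is that $v$ depends on $\lambda_3$ alone. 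I would therefore foliate $K$ by the planes $\lambda_3 = \text{const}$: the projected region is swept out, as $\lambda_3$ runs over $[0,1]$, by the $u$-intervals $[u_{\min}(\lambda_3), u_{\max}(\lambda_3)]$, so its boundary is traced by the two functions $u_{\min}$ and $u_{\max}$ together with the two extreme cross sections $\lambda_3 \in \{0,1\}$. Because at fixed $v$ the image is a single interval in $u$, no boundary contribution can be missed by this procedure.

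Next I would describe each cross section explicitly. Translating the convex-hull conditions $a_i \ge 0$ into inequalities for $\lambda_1, \lambda_2$ at fixed $\lambda_3$ yields a polygon bounded by $\lambda_1 \le \lambda_2 \le \lambda_3$, by $\lambda_1 + \lambda_2 \ge 0$, and by $\lambda_2 - \lambda_1 \le 1 - \lambda_3$. Since $u$ is an increasing function of the convex quantity $g := \lambda_1^2 + \lambda_2^2$, its minimum over the polygon is attained where $g$ is smallest and its maximum at a vertex of the polygon. The origin $\lambda_1 = \lambda_2 = 0$ always lies in the cross section and gives $g = 0$; as $\lambda_3$ varies it traces the locus $\lambda_1 = \lambda_2 = 0$, i.e.\ the edge $BD$ joining the completely depolarizing channel $D$ to the coarse-graining channel $B$. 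Hence $u_{\min}(\lambda_3) = \tfrac14(1 + \lambda_3^2) = v/2$ is realised on the edge $BD$, which furnishes the lower boundary.

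For $u_{\max}$ I would enumerate the polygon vertices: each is the intersection of two of the four constraint lines, and therefore lies on an edge of $K$ (namely $AD$, $CD$, $AC$, $BC$, or, at $\lambda_3 = 1$, $AB$). Evaluating $g$ at each vertex and comparing, while keeping track of which vertices are feasible for which $\lambda_3$, one finds that the maximum equals $2\lambda_3^2$ throughout, attained at the vertex $(\lambda_3, \lambda_3, \lambda_3)$ on the edge $AD$ (and, for $\lambda_3 \le \tfrac13$, also at $(-\lambda_3, \lambda_3, \lambda_3)$ on $CD$, which projects onto the same line). This gives $u_{\max}(\lambda_3) = \tfrac14(1 + 3\lambda_3^2) = \tfrac12(3v - 1)$, the depolarizing line; one may alternatively invoke Proposition~\ref{prop:depol} directly to identify this upper boundary. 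The extreme cross section $\lambda_3 = 1$ is exactly the edge $AB$, supplying the remaining right-hand boundary $\s S_2^{\r{min}} = 0$, while $\lambda_3 = 0$ collapses to the single vertex $D$. Together these show that the boundary of the projection is the union of the images of the edges $AB$, $BD$ and $AD$, the remaining edges $AC$ and $BC$ mapping into the interior.

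The main obstacle is the vertex comparison in the last step: one must verify, over the full range of $\lambda_3$, that no feasible vertex of the cross section yields a value of $g$ exceeding $2\lambda_3^2$, paying attention to the change of active constraints at $\lambda_3 = \tfrac13$, where the vertices on $CD$ and $BC$ enter or leave the feasible set. The comparisons reduce to sign analysis of quadratics such as $(3\lambda_3 - 1)(\lambda_3 - 1)$, which is routine but must be carried out piecewise.
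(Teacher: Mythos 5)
Your argument is correct, and it reaches the result by a genuinely different route than the paper. You run a single unified optimization: pass to $(u,v)=\bigl(\r e^{-\s S_2^{\r{map}}},\r e^{-\s S_2^{\r{min}}}\bigr)$, slice $K$ along $\lambda_3=\text{const}$, and extremize $u=\tfrac14\bigl(1+\lambda_1^2+\lambda_2^2+\lambda_3^2\bigr)$ over each polygonal slice using convexity of $g=\lambda_1^2+\lambda_2^2$ (minimum $0$ at the always-feasible point $\lambda_1=\lambda_2=0$, i.e.\ on $BD$; maximum at a vertex, which the case analysis at $\lambda_3=\tfrac13$ shows equals $2\lambda_3^2$, attained on $AD$ and, for $\lambda_3\le\tfrac13$, on $CD$). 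This yields the projected set exactly: the triangle $\tfrac12 v\le u\le\tfrac12(3v-1)$, $\tfrac12\le v\le 1$, whose sides are the images of $BD$, $AD$ and $AB$, with $AC$ and $BC$ mapping to its interior. The paper instead gives three separate structural arguments: $AB$ is the locus $\s S_2^{\r{min}}=0$, the global minimum; $AD$ is a boundary by Proposition~\ref{prop:depol}, which holds in every dimension; and $BD$ is a boundary because the diagonal of a density matrix is majorized by its spectrum, so Schur concavity of $\s S_q$ for $q\ge 1$ shows that classical (diagonal) channels have the largest map entropy among bistochastic channels with a given minimal output entropy. Your computation buys self-containedness (you re-derive the depolarizing boundary rather than citing Proposition~\ref{prop:depol}) and a complete, explicit description of the projected region, not merely of its bounding curves; what it gives up is the generality and conceptual content of the paper's argument, since the majorization step extends to every Renyi index $q\ge1$ and Proposition~\ref{prop:depol} to every $n$, whereas your vertex comparisons, including the change of active constraints at $\lambda_3=\tfrac13$ (which check out), are tied to the qubit parametrization. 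One incidental benefit: your expression for $u$ is the correct reading of the paper's misprinted formula $\s S_2^{\r{map}}=-\log\sum_i\abs{b_i}$, which should be $-\log\sum_i b_i^2=-\log\norm{V}^2$.
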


\begin{proof}
Consider figure~\ref{fig:czworoscian}~$b)$. Take $a_3 = 0$ and $a_4 = 0$ so that $a_2 = 1-a_1$ and use~(\ref{minout}) to see that $\s S^{\r{min}}_{2} = 0$. This is the smallest possible value of minimal output entropy. The line $AB$ in Fig.~\ref{fig:reni2}, corresponding to the dephasing channels, describes such maps. The proof that the line $AD$, characterizing depolarizing channels, is a boundary of the set is given in Section~\ref{sec2} in general, not necessarily for qubits or for bistochastic channels. The line $BD$, corresponding to $a_1 = a_3 = 0$ in the tetrahedron, represents classical bistochastic maps, characterized by a diagonal dynamical matrix. All bistochastic qubit channels which have the same minimal output entropy have the same the longest axis $\abs{\lambda_3}$. They are situated on the horizontal line in the $\bigl( \s S_2^{\r{map}}, \s S_2^{\r{min}} \bigr)$ plot. The dynamical matrix of such maps reads:
\begin{equation}
\frac{1}{2}D^\Phi =\sigma^{\Phi}= \frac{1}{4}\, \begin{pmatrix}
&1+\lambda_3  &0 &0 &\lambda_1 + \lambda_2 \\
&0  &1 - \lambda_3 &\lambda_1 - \lambda_2 &0 \\
&0  &\lambda_1 - \lambda_2 &1 - \lambda_3 &0 \\
&\lambda_1 + \lambda_2 &0 &0 &1 + \lambda_3
\end{pmatrix}.
\end{equation}
The dynamical matrix of a classical bistochastic qubit map $\Phi_{\r c}$ of the same minimal output entropy contains only diagonal elements of this matrix $D^{\Phi_{\r c}} = \r{diag}(D^\Phi)$. Due to the majorization theorem the spectrum of a density matrix majorizes its diagonal. Schur concavity of the Renyi entropy $\s S_q$ for $q \ge 1$, see e.g.~\cite{augusiak}, implies that $\s S_q(D^\Phi) \le \s S_q(\r{diag}(D^\Phi))$. Therefore classical bistochastic channels have the greatest map entropy among all bistochastic maps with the same minimal output entropy.  This completes the proof.
\end{proof}

Qubit stochastic channels can occupy also the space on the right from the line $BD$.

\section{Conclusions}

In this work we prove in general additivity of map entropies with respect to the tensor product of channels. We also analyse the relation between two entropic characteristics of a quantum channel: the map entropy $\s S_2^{\r{map}}$ and the minimal output entropy $\s S_2^{\r{min}}$. This approach allows us to distinguish a class of product channels for which additivity of minimal output Renyi entropy of order 2 is conjectured. The relation between the minimal output entropy and the map entropy distinguishes the depolarizing channels as those which form a part of the boundary of the set of all quantum maps projected on the $\bigl( \s S_2^{\r{map}}, \s S_2^{\r{min}})$-plane. The image of the bistochastic qubit channels on this plane was analysed using the asymmetric tetrahedron of Pauli channels.

\begin{figure}[h]
\centering
\scalebox{1}{\includegraphics{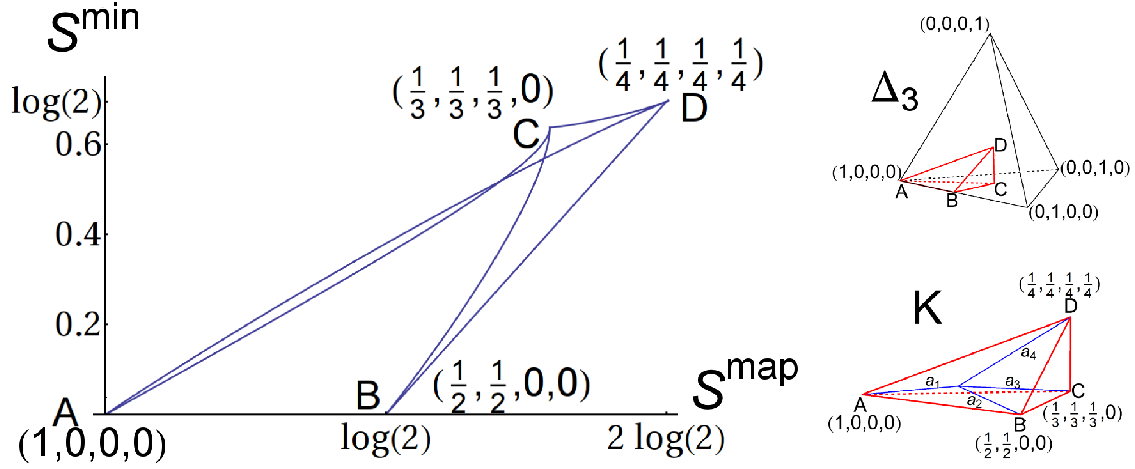}}
\caption{Boundaries of the set of bistochastic qubit channels projected on the plane of von~Neumann entropies $(\s S^{\map},\s S^{\min})$.}
\label{fig:konturzbrylami}
\end{figure}

A similar projection of the set of operations determined by the von~Neumann entropy instead of the second Renyi entropy does not yield a convex set, see Fig.~\ref{fig:konturzbrylami}. Moreover, the family of depolarizing channels does not correspond in this case with the upper boundary of the set. Thus the reasoning in Section~\ref{sec4} about the set of maps for which additivity of minimal output Renyi entropy $\s S_2^{\r{min}}$ is conjectured cannot be directly transferred to the von~Neumann entropy. On the other hand the Renyi entropy is a non-increasing function of the parameter $q$, in particular $\s S(\rho) \ge \s S_2(\rho)$.  Moreover, in  finite dimensions the Renyi entropy depends continuously on its parameter. It is therefore likely that a similar statement holds also for the von Neumann entropy: For two channels $\Phi_1$ and $\Phi_2$, acting on a space of $n$-dimensional density matrices, such that  $\s S^{\r{map}}(\Phi_1) \gg  \s S^{\r{map}}(\Phi_2)$ we  conjecture the additivity of the minimal output entropy, $\s S^{\r{min}}(\Phi_1 \otimes \Phi_2) =  \s S^{\r{min}}(\Phi_1) +  \s S^{\r{min}}(\Phi_1)$.

The above statement also suggests that one should consider in low dimensions two channels with a same map entropy in order to find a counterexample to the additivity conjecture of minimal output entropy. This is precisely the case for Hastings's counterexample~ \cite{hastings} in which a random channel and its conjugate were used.

\section*{ Appendix. Qubit channels}

Any qubit density matrix $\rho$ can be decomposed in the basis of the identity and three Pauli matrices. This decomposition is called the \textit{Bloch representation:}
\begin{equation}
\rho = \tfrac{1}{2}\, (\idty + \vec{\b w} \cdot \vec{\bi\sigma}).
\label{mgest}
\end{equation}
Since a density matrix is Hermitian the Bloch vector $\vec{\b w}$ is real while positivity of $\rho$ is equivalent with $\norm{\vec{\b w}} \le 1$. The set of all such vectors $\vec{\b w}$ is the \textit{Bloch ball}. Therefore one can represent any affine transformation of the qubit states, \textit{a fortiori} a qubit channel, by a $4\times4$ matrix $\Phi$ acting on the extended Bloch vector $(1, \vec{\b w})^{\s T}$. One can choose a basis such that
\begin{equation}
\Phi = \begin{pmatrix}
&1  &0 &0 &0 \\
&t_1 &\lambda_1 &0 &0 \\
&t_2 &0 &\lambda_2 &0 \\
&t_3 &0 &0 &\lambda_3
\end{pmatrix}.
\label{bloch}
\end{equation}
The channel transforms the Bloch ball into the ellipsoid
\begin{equation}
\Bigl( \frac{x-t_1}{\lambda_1} \Bigr)^2 + \Bigl( \frac{y-t_2}{\lambda_2} \Bigr)^2 + \Bigl( \frac{z-t_3}{\lambda_3} \Bigr)^2 \le 1.
\end{equation}
The ellipsoid has three main axes of half lengths $\{ \abs{\lambda_i} \mid i \}$ and its centre is translated with respect to the centre of Bloch ball by the vector $\vec{\b t} = (t_1,t_2,t_3)$. The positivity of the map guarantees that the ellipsoid lies inside the ball.

The corresponding normalized dynamical matrix in Bloch parametrization~(\ref{bloch}) is
\begin{equation}
\frac{1}{2}D^\Phi= \sigma^\Phi = \frac{1}{4}\, \begin{pmatrix}
&1 + \lambda_3 + t_3  &0 &t_1 + \r it_2 &\lambda_1 + \lambda_2 \\
&0  &1 - \lambda_3 + t_3 &\lambda_1 - \lambda_2 &t_1 + \r it_2 \\
&t_1 - \r it_2  &\lambda_1 - \lambda_2 &1 - \lambda_3 - t_3 &0 \\
&\lambda_1 + \lambda_2 &t_1 - \r it_2 &0 &1 + \lambda_3 - t_3
\end{pmatrix}.
\label{dyn2}
\end{equation}
The eigenvalues $v_i$ of the dynamical matrix are connected with the parameters $\lambda_i$ in the Bloch representation of a channel~(\ref{bloch}) by:
\begin{equation}
\begin{split}
&v_1 = \tfrac{1}{4}\, (1 + \lambda_1 + \lambda_2 + \lambda_3), \\
&v_2 = \tfrac{1}{4}\, (1 - \lambda_1 - \lambda_2 + \lambda_3), \\
&v_3 = \tfrac{1}{4}\, (1 - \lambda_1 + \lambda_2 - \lambda_3),\enskip\text{and} \\
&v_4 = \tfrac{1}{4}\, (1 + \lambda_1 - \lambda_2 - \lambda_3).
\end{split}
\label{uklad}
\end{equation}
The vector $\vec{\b v} := (v_1,v_2,v_3,v_4)$ corresponds to the vector $\vec{b}$~from~(\ref{paulich}), see~\cite{KZ}. The minimal output entropy of a bistochastic channel is the minimal entropy of an output state for a pure input state. The output state obtained by acting with the operation~(\ref{bloch}) on a state with Bloch vector $\vec{\b w}$ with $\norm{\vec{\b w }} = 1$ has a Renyi entropy
\begin{equation}
\s S_2^{\r{map}}(\lambda_1, \lambda_2, \lambda_3) := -\log \tfrac{1}{2}\, \bigl( 1 + \lambda_3^2 + w_1^2\, (\lambda_1^2 - \lambda_3^2) + w_2^2\, (\lambda_2^2 - \lambda_3^2) \bigr).
\end{equation}
Because $\abs{\lambda_1} \le \abs{\lambda_2} \le \abs{\lambda_3}$ the coefficients of $w_1$ and $w_2$ are non positive. Hence, $\s S_2^{\r{map}}(\lambda_1, \lambda_2, \lambda_3)$ reaches its minimum when $w_1 = w_2 = 0$ and the minimum Renyi output entropy depends only on the longest axis
\begin{equation}
\s S_2^{\r{min}}(\lambda_1, \lambda_2, \lambda_3) = -\log \tfrac{1}{2}\, \bigl( 1 + \lambda_3^2 \bigr).
\end{equation}
By solving the equations~(\ref{uklad}, \ref{eigen}) we obtain minimal output entropy of $\Phi$ as a function of the weights
\begin{equation}
\s S_2^{\r{min}}(a_1, a_2, a_4) = -\log \tfrac{1}{18}\, \bigl( 9 + (1 + 2a_1 + 2a_2 - a_4)^2 \bigr).
\label{minout}
\end{equation}

\medskip
\textbf{Acknowledgements:}
We acknowledge financial support by the grant number N202 090239 of the Polish Ministry of Science, by the Belgian Interuniversity Attraction Poles Programme P6/02, and by the FWO Vlaanderen project G040710N.

\end{document}